\newtheorem{lemma}{Lemma}
\newtheorem{theorem}{Theorem}
\title{ELB-Trees\\An Efficient and Lock-free B-tree Derivative}
\author{Lars F.~Bonnichsen, Sven Karlsson, Christian W.~Probst}
\date{\today}
\begin{document}

\maketitle

\section{Introduction}
\label{sec-1}

This technical report is an extension of the paper of the same title, which is to appear at MUCOCOS'13. The technical report proves correctness of the ELB-trees operations' semantics and that the operations are lock-free.

The following is a brief summary of the design of the datastructure, which is detailed in section 3 of the paper.
All ELB-trees have a permanent root node $r$ with a single child.
ELB-trees are $k$-ary leaf-oriented search tree, or multiway search trees, so internal nodes have up to $k$ children and $k-1$ keys. An ELB-trees contain a set $E_r$ of integer keys in the range $(0;2^{63})$. The key 0 is reserved. Keys have an additional read-only bit: when the read-only bit is set, the key cannot be written to.  ELB-trees offer 3 main operations:
\begin{itemize}
\item Search($e_1$, $e_2$) returns a key $e$ from $E_r$ satisfying $e_1 \le e \le e_2$, if such a key exists. Otherwise it returns $0$.
\item Remove($e_1$, $e_2$) removes and returns a key $e$ from $E_r$ satisfying $e_1 \le e \le e_2$, if such a key exists. Otherwise it returns $0$.
\item Insert($e$) adds $e$ to $E_r$, if $e$ was not in $E_r$ before.
\end{itemize}
ELB-trees can also be used as dictionaries or priority queues by storing values in the least significant bits of the keys.

The operations of ELB-trees cannot generally be expressed as atomic operations, as they occur over a time interval. As a consequence, series of concurrent operations cannot generally be expressed as ocurring serially, that is the semantics are not linearizable.
However, the set $E_r$ is atomic.
$E_r$ is the union of the keys in the leaf nodes of the ELB-tree.
The keys in internal nodes guide tree search.

Section 2 provides formal definitions for terms used throughout the proof.
The proof starts in Section 3 by proving that ELB-trees are leaf-oriented search trees.
We prove through induction, that 
ELB-trees are leaf-oriented search trees initially, and that all operations maintain that property.
The inductive step is assisted by two significant subproofs:
\begin{enumerate}
\item Rebalancing does not change the keys in $E_r$.
\item The keys in leaf nodes are within a permanent range.
\end{enumerate}

These properties hold due to the behavior of rebalancing.
The first subproof shows that rebalancing is deterministic, even when concurrent.
The second shows that leaf nodes have a range of keys they may contain and it never changes.

Given these properties, Section 4 derives the operations' semantics.
Section 5 follows up by proving that the operations are lock-free.
First we prove that some operation has made progress whenever a node is rebalanced.
Next we prove that some operation has made progress whenever any part of an operation is restarted.

Section 6 concludes the technical report with a summary.
\section{Definitions}
\label{sec-2}

This section introduces definitions used in the following proofs of the ELB-trees' properties. The definitions start with the terms used, before moving on to the contents and properties of nodes. Finally the intitial state of ELB-trees is formally defined.

Let $L$ be the set of leaf ndoes, $I$ the set of internal nodes, and $T$ the set of points in time. The sets are disjoint.

Nodes contain:
\begin{description}
\item[$C_i (t)$] list of children of internal node $i$ at time $t$
\item[$S_i (t)$] list of keys in internal node $i$ at time $t$
\item[$E_n (t)$] keys represeted by the node $n$ where at time $t$:\begin{center}$E_n (t) =$ \begin{math} \left\{
     \begin{array}{lr}
       $Non-zero keys in $l & n \in L \\
       \bigcup _{c \in C_i (t)} E_c (t) & : n \in I
     \end{array}
   \right. \end{math}\end{center} 
\end{description}

The following node properties can be derived from their content:
\begin{description}
\item[$D_n (t)$] the descendants of node $n$ at time $t$: \\ $D_n (t)$ = \begin{math} \left\{
     \begin{array}{lr}
       \emptyset & : n \in L\\
       C_n (t) \cup \bigcup _{d \in C_n(t)} D_d (t) & : n \in I
     \end{array}
   \right.\end{math} \\ $n$ is reachable when $reachable_n (t) \equiv n \in (\{r\} \cup D_r (t))$
\item[$parent_n (t)$] the parents of node $n$: \\ $parent_n (t) = \{i \in reachable_r (t) | n \in C_i (t)\}, t \in T$
\end{description}

Initially $r$ has one child $C_r (0) = \langle ic \rangle$, and one grandchild $C_{ic} (0) = \\ \langle ln \rangle$.
The grandchild is an empty leaf node $E_{ln} (0) = \emptyset \wedge E_r (0) = \emptyset$.

\section{Search tree proof}
\label{sec-3}

This section proves that ELB-trees are $k$-ary leaf-oriented search trees.
In such a tree, all nodes except the root have one parent, and all internal nodes have strictly ordered keys.
Specifically the $i$'th key in a node provides an upper bound for the $i$'th child of the node, and a lower bound for the $i + 1$'th child.
The key ordering is formally expressed as: 
\begin{center} $W_i (t) \equiv \forall j \in [0;C_i (t)). E_{{C_i(t)} _t} \subseteq (0; {S_i}_j] \wedge E_{{C_i (t)} _t} \subseteq ({S_i} _j; 2^{63})$ \end{center}
The tree property is formally expressed as:
\begin{center} $\forall n \in reachable_n (t). \left\vert parent_n (t) \right\vert = 1 \vee n = r$ \end{center}
The properties are proven inductively, but doing so requires several intermediate steps.
To begin with, we will show that the behavior of rebalancing of search trees is deterministic, and does not change $E_r$.

\begin{lemma}\label{ro-reb}
Unbalanced nodes and their parent are read-only while rebalancing. \end{lemma}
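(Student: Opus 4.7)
The plan is to reduce the claim to a direct consequence of the semantics of the read-only bit stated in Section~1. Since a key whose read-only bit is set cannot be written to by any operation, it suffices to show two things for any rebalancing that targets an unbalanced node $u$ with parent $p$: (i) before any structural write is issued against $u$ or $p$ by the rebalancing thread, the read-only bits on the keys of both $u$ and $p$ have already been set; and (ii) those bits remain set for the entire duration of the rebalancing. Together these imply that no write to the keys of $u$ or $p$ can occur during rebalancing, which is the sense in which the nodes are ``read-only while rebalancing''.

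First, I would recall the pseudocode for the rebalancing routine from Section~3 of the main paper and pinpoint the prologue in which the rebalancer locates $u$, reads $p$, and then installs read-only marks on the keys of $p$ followed by those of $u$ (presumably via CAS). I would identify this marking as the commit point for rebalancing: every subsequent modification to the tree performed by this routine is of a form that treats the marked keys as immutable (e.g.\ installing a fresh replacement subtree, rather than mutating $u$ or $p$ in place). This part is essentially a walk through the algorithm.

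Second, I would argue persistence of the read-only bits for the duration of the rebalance. Since, by the semantics of the bit, no operation is permitted to write to a key once the bit is set, in particular no operation can \emph{clear} the bit; hence once installed by the rebalancer, the bits remain set at least until $u$ and $p$ become unreachable. Combined with the claim that only keys belonging to $u$ and $p$ are modified by peer operations that could target these nodes, this gives the desired non-interference: concurrent Search/Remove/Insert/rebalance attempts observing the marked keys must either abort or help, but cannot write.

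The main obstacle I anticipate is the case of two rebalancers whose candidate nodes overlap, where one must show that only the rebalancer that successfully marks every required key proceeds to write, while any partial marker must yield. This will require a careful case analysis on the order in which the read-only CASes succeed at $p$ and $u$, and probably a short lemma stating that a thread observing a read-only key on a would-be target aborts (or helps) before attempting any mutation. Once that interference argument is settled, the rest of the proof is routine inspection of the rebalancing steps to confirm that each write targets either a freshly allocated node or a pointer field outside $u$ and $p$.
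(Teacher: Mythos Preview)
Your plan rests on a mechanism that does not match the data structure. You reduce everything to the per-key read-only bit described in Section~1, and propose to show that the rebalancer sets this bit on the keys of both $u$ and $p$. But the parent $p$ of an unbalanced node is always an \emph{internal} node, and internal nodes are not made read-only via per-key bits at all: they are frozen by setting their \emph{status field}. The key read-only bit is the freezing mechanism for \emph{leaf} nodes only. So step~(i) of your plan, ``installs read-only marks on the keys of $p$'', describes something the algorithm does not do, and the semantics you invoke (``a key whose read-only bit is set cannot be written to'') gives you nothing about writes to an internal node's child array or keys. The same issue arises for $u$ whenever $u$ is itself internal.

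Beyond this mismatch, the paper's proof is far lighter than what you propose. It is a direct inspection of the rebalancing prologue: while locating the involved nodes, the code sets the status field on each internal node and the read-only bit on every key of each leaf node; a pointer to the relevant pseudocode figure completes the argument. There is no separate persistence argument, no case analysis on overlapping rebalancers, and no helping lemma; those concerns are deferred to the later determinism lemma, which relies on the CAS/ABA discipline on the grandparent's status field rather than on anything you develop here. If you want to rescue your outline, you would need to treat the two freezing mechanisms separately and argue, for internal nodes, that a set status field blocks concurrent mutation --- a claim whose justification again comes from the algorithm's CAS protocol, not from the per-key bit semantics.
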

\begin{proof} While finding the nodes involved in rebalancing, they are made read-only: internal nodes are made read-only by setting their status field, and
leaf nodes are made read-only by setting the read-only bit of all their keys, see Figure~16 in the paper.\end{proof}

\begin{lemma}\label{reb-nodes}
If $W_r$ holds and the unbalanced nodes' parent is still reachable, all threads can find the nodes involved in a rebalancing from the status field of the unbalanced nodes grandparent, . \end{lemma}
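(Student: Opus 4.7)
The plan is constructive: exhibit a procedure by which any thread that has reached the grandparent $g$ can recover every node involved in the ongoing rebalancing, and then argue that the procedure succeeds under the stated hypotheses. The nodes to be recovered are $p$ (the unbalanced nodes' parent, a child of $g$), the unbalanced children of $p$ themselves, and any siblings the rebalancing protocol requires.

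First I would argue that once a rebalancing of the subtree rooted at $p$ begins, the protocol writes a descriptor into $g$'s status field that identifies $p$ together with the unbalanced children of $p$. This write must happen strictly before any structural modification to the subtree, since otherwise the read-only markings required by Lemma~\ref{ro-reb} would not yet be in place. Consequently, whenever a concurrent thread observes $g$'s status in its ``rebalancing'' state, the descriptor is already fully installed.

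Next I would recover the remaining nodes by traversal from $g$. Because $W_r$ holds, the search ordering makes $g$'s role in the tree unambiguous and its outgoing pointers consistent with the descriptor; because $p$ is still reachable through $g$ by hypothesis, the pointer from $g$ to $p$ is live; and because $p$ and its unbalanced children are read-only by Lemma~\ref{ro-reb}, both $C_p$ and the keys of each child are frozen for the duration of the rebalancing. A thread can therefore read $g$'s status, follow the descriptor to $p$, inspect $C_p$, and identify each unbalanced child (along with any sibling the protocol mentions) with the guarantee that every other concurrent thread performing the same lookup observes exactly the same set of nodes.

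The main obstacle is ruling out inconsistent observations across concurrent threads rather than the mere existence of the descriptor. This reduces to three ingredients already available: (i)~the atomicity of the status-field write, which ensures that the descriptor becomes visible as a whole; (ii)~Lemma~\ref{ro-reb}, which freezes the observable state of $p$ and its unbalanced children throughout rebalancing; and (iii)~the reachability assumption on $p$, which prevents $g$'s relevant child pointer from being replaced before the rebalancing completes. Once these are assembled, the identification of the rebalancing nodes from $g$'s status field is deterministic for every thread, which is exactly the conclusion of the lemma.
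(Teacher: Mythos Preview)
Your overall shape is reasonable, but you have misidentified the mechanism that actually does the work, and as a result the hypothesis $W_r$ is left doing nothing concrete in your argument.

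In the paper's design the grandparent's status field does not hold a descriptor that points to $p$ and its unbalanced children; it stores the \emph{key} of the unbalanced node and of its parent. Hence ``follow the descriptor to $p$'' is not a pointer dereference but a key-based search inside $g$ and then inside $p$. This is exactly why $W_r$ appears as a hypothesis: the search-tree ordering $W_r$ is what guarantees that searching $g$ for the stored parent-key yields $p$, and searching $p$ for the stored child-key yields the unbalanced node, uniquely and identically for every thread. The paper's proof is essentially those two sentences. In your write-up, by contrast, $W_r$ is invoked only to say that ``$g$'s role in the tree is unambiguous and its outgoing pointers are consistent with the descriptor,'' which does not explain why the ordering property is needed at all---if the descriptor were a pointer, reachability of $p$ and Lemma~\ref{ro-reb} would already suffice, and $W_r$ could be dropped.

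The extra ingredients you marshal (atomic installation of the status field, the freeze from Lemma~\ref{ro-reb}, the reachability assumption on $p$) are sensible and flesh out details the paper leaves implicit, but the core step should be rewritten as: the status field records keys; because $W_r$ holds, searching the grandparent and then the parent for those keys deterministically locates the involved nodes.
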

\begin{proof} The status field stores the key of the unbalanced node and its parent.
Since $W_r$ holds, the nodes can be found by searching for the key in the grandparent and parent of the unbalanced node.\end{proof}

\begin{lemma}\label{inv-detreb}
Rebalancing completes deterministically exactly once, if $W_r$ holds. \end{lemma}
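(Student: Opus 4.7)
The plan is to combine the two preceding lemmas with the observation that every structural change during rebalancing is performed by a single CAS on a well-identified word, so concurrent helpers can only repeat work that has no observable effect.

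First I would fix an unbalanced node $u$, its parent $p$, and its grandparent $g$ at the time the status field of $g$ is set. By Lemma~\ref{reb-nodes}, assuming $W_r$ holds and $p$ is still reachable, every thread that observes the status field of $g$ identifies the same triple $(g,p,u)$. By Lemma~\ref{ro-reb}, once the status field is set the contents $C_p, S_p, C_u, S_u$ (and for a leaf, the keys of $u$) are frozen. Hence the replacement subtree built by the rebalancing routine is a pure function of the frozen inputs: any thread that helps will construct a subtree with the same shape, the same internal keys, and the same multiset of leaf keys. This gives determinism of what gets installed.

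Next I would handle the ``exactly once'' part. The rebalanced subtree is spliced in by a single CAS on the child pointer in $g$ that currently points to $p$, swapping it for a pointer to the new subtree. Because $W_r$ holds, all helpers target the same slot in $g$ with the same expected old value; by the semantics of CAS, exactly one of these succeeds and the others fail and observe that the slot now holds the new pointer. A symmetric argument applies to clearing the status field of $g$ afterwards: the CAS that transitions it from the rebalancing token back to the clean state succeeds exactly once, and any helper that sees it already clean concludes the rebalancing is finished. Together this shows the rebalancing ``completes'' (the child swap and status clear both occur) and does so ``exactly once'' in the sense of having a single observable effect on the tree.

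The main obstacle I expect is justifying that the inputs really are frozen for the whole duration of the replacement computation, not just at the moment each helper reads them. This follows from Lemma~\ref{ro-reb} together with the fact that $p$ and $u$ are only unlinked by the successful swap CAS itself: until that CAS fires they remain reachable and read-only, so every helper sees the same frozen state and computes the same replacement; after the CAS fires, late helpers detect that the status field is clear (or that the expected child pointer no longer matches) and abort without any further modification. A secondary subtlety, which I would address explicitly, is that the new subtree nodes are freshly allocated by each helper, so distinct helpers propose distinct but equivalent pointers; determinism is therefore ``up to identity of the new nodes'', which is enough because only one pointer is ever installed in $g$ and only that one becomes reachable from $r$.
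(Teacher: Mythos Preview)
Your proposal is correct and follows essentially the same line as the paper: invoke Lemma~\ref{reb-nodes} and Lemma~\ref{ro-reb} to fix the involved nodes and the rebalancing decision deterministically, then argue that the parent replacement and the status-field clear are each installed by a single CAS so only one takes observable effect. The paper's own proof is much terser and leans on two implementation facts you could not have seen---that the CAS operations are ABA-safe (which plugs the small hole in your ``same expected old value implies exactly one success'' step) and that the grandparent's status field carries a \textsc{STEP2} marker ensuring the grandparent is still reachable at the moment of the swap---but structurally your argument and the paper's coincide.
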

\begin{proof} Rebalancing finds the involved nodes (Lemma \ref{reb-nodes}) and decides how to rebalance (Lemma \ref{ro-reb}) determinstically. The parent is replaced, and the grandparent's status field is cleared using ABA safe CAS operations, see Section~3b of the paper. The grandparent has the status field \{*,*,*,STEP2\} when replacing the parent, ensuring that the grandparent is reachable when replacing the parent node.\end{proof}

\begin{lemma}\label{Er-reb}
$E_r (t)$ does not change when rebalancing, if $W_r$ holds. \end{lemma}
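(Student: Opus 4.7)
My plan is to leverage Lemmas~\ref{ro-reb} and~\ref{inv-detreb} to reduce the claim about the concurrent behavior to a single atomic transition, and then to argue that this transition preserves the multiset of leaf keys by construction. By Lemma~\ref{inv-detreb} rebalancing commits exactly once, through the CAS that clears the grandparent's status field and swaps in the new parent subtree; call this moment $t^{*}$. For all $t < t^{*}$ the tree visible from $r$ still points at the old parent, and for $t \geq t^{*}$ it points at the new parent, while the rest of the tree rooted at other children of the grandparent (and everything outside the grandparent) is untouched. So it suffices to show that $E_p^{\text{old}}(t^{*-}) = E_p^{\text{new}}(t^{*+})$, where $p^{\text{old}}$ and $p^{\text{new}}$ are the replaced and replacement parent subtrees.

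First I would fix the interval $[t_0, t^{*}]$ where $t_0$ is the time at which the unbalanced nodes and their parent are marked read-only. By Lemma~\ref{ro-reb}, throughout this interval no key can be added to, or removed from, any leaf descendant of the old parent (and no internal key in the old parent can change), so $E_{p^{\text{old}}}(t)$ is constant on $[t_0, t^{*}]$. Because $W_r$ holds and the grandparent was reachable at $t^{*}$ (guaranteed by the STEP2 status, as used in the proof of Lemma~\ref{inv-detreb}), the subtree rooted at $p^{\text{old}}$ was the unique subtree of the grandparent contributing those keys to $E_r$.

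Next I would argue that the rebalancing thread constructs $p^{\text{new}}$ as a function only of the read-only contents of the old parent and its unbalanced children. Since those contents are frozen, and rebalancing is deterministic (Lemma~\ref{inv-detreb}), the thread redistributes exactly the same multiset of leaf keys among the new leaves and routes them with appropriately chosen internal keys; no new keys are invented and none are dropped. Formally, the construction satisfies $E_{p^{\text{new}}} = E_{p^{\text{old}}}(t_0)$, which by the previous paragraph equals $E_{p^{\text{old}}}(t^{*-})$.

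Finally, I would combine these observations: at $t^{*}$ the only change to the tree reachable from $r$ is replacing $p^{\text{old}}$ by $p^{\text{new}}$ in the grandparent's child list, and the two subtrees have equal $E$ sets, so by the recursive definition of $E_n$ the value of $E_r$ is unchanged at $t^{*}$. For all other times during rebalancing, $E_r$ is unchanged because the only writes performed are the read-only markings (which do not alter keys) and the single commit CAS at $t^{*}$. The main obstacle I anticipate is the last paragraph's subtlety about concurrent modifications elsewhere in the tree: I will need to argue carefully that writes into sibling subtrees or into ancestors of the grandparent do not interact with the rebalancing's CAS, which follows because those writes do not touch the grandparent's child slot that rebalancing modifies, and because the equality $E_{p^{\text{old}}} = E_{p^{\text{new}}}$ is a local statement about that slot alone.
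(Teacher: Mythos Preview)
Your proposal is correct and follows essentially the same approach as the paper: the key observation is that the new nodes are constructed by copying from the old nodes while those are read-only (Lemma~\ref{ro-reb}), so the multiset of leaf keys is preserved. The paper's proof is a one-liner stating exactly this; your version is a more detailed elaboration that additionally makes the commit moment~$t^{*}$ and the appeal to Lemma~\ref{inv-detreb} explicit, which is fine but not required for the argument.
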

\begin{proof} The content of balanced nodes and their new parent is copied from the old nodes, while their content is read-only (Lemma \ref{ro-reb}).\end{proof}

The preceding lemmas show that rebalancing is well-behaved in search trees. The following lemmas will show that all operations maintain the tree property and $W_r$.

\begin{lemma}\label{inv-tree}
All operations maintain the tree property, if $W_r$ holds. \end{lemma}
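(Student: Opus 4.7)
The plan is to induct on the sequence of tree-modifying steps, case-splitting on the operation that causes each transition. Search performs only reads, so it cannot alter any $C_i$, hence not any $parent_n$ and not $reachable_r$; I would dispose of this case in one line. For Insert and Remove, the only writes are to the key arrays of leaf nodes (marking a key as written or clearing it), which change some $E_n$ but leave every $C_i$ list untouched; parent relationships and reachability are therefore unaffected. Both cases reduce to observing that the code path never writes to an internal node's children list.

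The nontrivial case is rebalancing, and this is where I expect to spend the bulk of the argument. By Lemma \ref{inv-detreb} a rebalancing takes effect through a single successful ABA-safe CAS on the grandparent's children list, swapping in a freshly allocated new parent whose subtree consists of freshly allocated nodes with content copied from the old unbalanced children (Lemma \ref{Er-reb}). I would argue that immediately before the winning CAS, none of the new nodes are in $reachable_r$: since they were just allocated, no other $C_i$ list can reference them, so each new node has $parent = \emptyset$. The CAS then atomically grafts the whole new subtree in, giving the new parent exactly one parent (the grandparent) and each new child exactly one parent (the new parent). Simultaneously the old parent and all its descendants leave $D_r$, and by Lemma \ref{ro-reb} they were read-only throughout the rebalancing window, so no concurrent step can have inserted them into some other $C_i$ list along the way; they therefore become unreachable cleanly, with no dangling multi-parent situations.

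The obstacle I expect to be most delicate is ruling out interference from helper threads. Because the algorithm lets several threads attempt the same rebalancing concurrently, I must show that at most one of their candidate subtrees ever becomes reachable. This is exactly what Lemma \ref{inv-detreb} provides: all helpers compute the same replacement (by Lemma \ref{reb-nodes}, which in turn needs the $W_r$ hypothesis of the present lemma), and the ABA-safe CAS on the grandparent allows only one candidate subtree to be installed, after which further attempts fail and their fresh nodes are discarded without ever being linked in. Combined with the earlier observation that the winning candidate's nodes were previously unreachable, this closes the inductive step.

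For the base case I would verify from the initial configuration $C_r(0) = \langle ic \rangle$, $C_{ic}(0) = \langle ln \rangle$ with $ln \in L$ that $r$, $ic$, and $ln$ are the only reachable nodes and that each of $ic$ and $ln$ has exactly one parent, so the tree property holds at $t = 0$. The induction then yields the lemma.
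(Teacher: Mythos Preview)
Your proposal is correct and follows essentially the same approach as the paper: both reduce to the rebalancing case (since only rebalancing touches any $C_i$) and argue that after the old parent $op$ is replaced by the fresh node $np$, every child of $np$ has $np$ as its sole parent, either because it was already a child of $op$ or because it is itself freshly allocated. Your version is considerably more detailed---explicitly setting up the induction, handling the base case, and ruling out helper-thread interference via Lemma~\ref{inv-detreb}---whereas the paper compresses all of this into a short paragraph centered on the implication $(\forall c \in C_{op}(t_1).\ parent_c(t_1)=\{op\}) \Rightarrow \forall c \in C_{np}(t_2).\ parent_c(t_2)=\{np\}$.
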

\begin{proof} $descendants_n$ only changes when rebalancing. Specifically, $descendants_n$ changes when replacing an internal node $op$ with a new node $np$. 
The children of $op$ had $op$ as their only parent, so all the children $np$ and $op$ share, will have $np$ as their only parent after rebalancing. The new children have $np$ as their only parent, because they have just been introduced, and the descendants of the new nodes have their parents replaced. Formally: \begin{center} $(\forall c \in C_{op} (t_1). parent_c (t_1) = \{op\}) \Rightarrow \forall c \in C_{np} (t_2). parent_c (t_2) = \{np\}$ \end{center}\end{proof}

\begin{lemma}\label{lrange}
Leaf nodes $l$ have a permanent range $R_l$ of keys they may contain, if $W_r$ holds.\end{lemma}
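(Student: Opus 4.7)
The plan is to define $R_l$ for each leaf $l$ at the instant $l$ becomes reachable, and then show by induction on time that $R_l$ is unchanged for as long as $l$ remains reachable. Concretely, if $r = n_0, n_1, \ldots, n_k = l$ is the path from the root to $l$ when $l$ first appears in $D_r$, and $l$ is the $j_{k-1}$'th child of $n_{k-1}$, I would let $R_l$ be the intersection of the bounding intervals contributed by each ancestor $n_i$ via the keys $S_{n_i, j_i - 1}$ and $S_{n_i, j_i}$ on the path, with natural conventions at the boundary slots. The hypothesis $W_r$ makes all these intervals well-defined and ensures their intersection is a single interval contained in $(0; 2^{63})$.

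Next I would enumerate the operations that modify the tree and check that none of them can change $R_l$ for a reachable $l$. Insertions and removals change only the key set of a single leaf, never an internal node, so they preserve $R_l$ trivially. Rebalancing is the only operation that can alter internal-node contents; by Lemma~\ref{inv-detreb} it completes deterministically by replacing an unbalanced node's parent $op$ with a freshly constructed $np$, while $op$ and the unbalanced child are read-only by Lemma~\ref{ro-reb}.

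The central step is to argue that for every child $c$ of $op$ that still appears as a child of $np$ (that is, one that was not itself the unbalanced node being split or merged), the keys of $np$ immediately surrounding $c$ equal the keys of $op$ that surrounded $c$. This follows from the same copy-and-preserve-order construction used in the proof of Lemma~\ref{Er-reb}: $np$ is built by reading the read-only contents of $op$, and the only deviations from $op$ occur at the slot of the unbalanced child. Ancestors of $op$ are untouched except for the grandparent's status field and its single child pointer, so the bounds they contribute to any descendant leaf are unchanged.

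The hard part, I expect, will be the case where $l$ is a descendant of, or a sibling of, the unbalanced child. I would need to appeal to the precise rebalancing procedures in Section~3 of the paper to verify that any leaf whose surrounding keys change is itself replaced by a fresh leaf (and therefore becomes unreachable, making the claim about it vacuous), while every leaf that truly survives inherits exactly the same surrounding keys in $np$. Once that case analysis is settled the induction closes immediately and $R_l$ is permanent for every reachable leaf.
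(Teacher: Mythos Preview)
Your approach is essentially the same as the paper's: define $R_l$ from the ancestor keys along the root-to-$l$ path and argue that rebalancing is the only operation that can alter those keys, with surviving leaves inheriting identical surrounding keys. The paper's proof is more compressed---it skips the explicit induction on time and directly case-splits on whether rebalancing produces two or three new nodes, observing that any removed, added, or updated key in the new parent is copied from the unbalanced node and therefore bounds only the freshly created nodes, exactly the case analysis you flag as the ``hard part.''
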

\begin{proof} The lower bound is given by the keys of its ancestors. The ancestors change deterministically when $W_r$ holds (Lemma \ref{inv-detreb}). Although the ancestors may change, their replacements use the same keys. 
Internal node keys are only introduced or removed when splitting and merging nodes, which results in two or three new nodes. 
When rebalancing results in two new nodes, the new parent has one less key. When rebalancing results in three new nodes, the new parent has one updated or additional key, which the old parent did not have. The updated or new key is copied from its the unbalanced nodes, so it only affects the new nodes. \end{proof}

\begin{lemma}\label{res-si}
If $W_r$ holds, the leaf node $l$ reached by $Search(e, e)$ satisfies: $W_r \Rightarrow e \in R_l$. \end{lemma}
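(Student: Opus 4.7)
The plan is to prove the lemma by induction on the depth of the search, maintaining the invariant that at every node $n$ visited by $Search(e,e)$ the key $e$ lies in the permanent range $R_n$ guaranteed by Lemma \ref{lrange}. The base case is immediate: the search starts at $r$, whose range is the whole admissible interval $(0,2^{63})$, so $e \in R_r$ trivially.

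For the inductive step, suppose the search has descended to an internal node $i$ with $e \in R_i$. The implementation selects the child index $j$ such that $e$ falls into the $j$-th slot induced by the keys $S_i$. Since $W_r$ holds, the keys of $i$ are strictly ordered and partition $R_i$ among the children of $i$, so exactly one such $j$ is compatible, and $R_i$ intersected with the $j$-th slot of $S_i$ is precisely $R_{C_i(t)_j}$ by Lemma \ref{lrange}. Thus $e \in R_{C_i(t)_j}$ and the invariant persists. When the search terminates at a leaf $l$, the invariant yields $e \in R_l$.

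The main obstacle I anticipate is the concurrency aspect: between two successive descending steps the tree may be rebalanced, so the ancestors of the currently visited node can be replaced and the search may even have to restart at an ancestor. I would dispose of this by appealing to Lemmas \ref{inv-detreb} and \ref{lrange}: rebalancing under $W_r$ is deterministic and reuses the same keys, so it leaves the permanent range $R_n$ of any still-reachable node $n$ unchanged. Consequently every local descent decision remains consistent with the global statement $e \in R_l$, and the induction goes through irrespective of how many rebalancings interleave with the search.
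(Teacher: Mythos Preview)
Your induction on the depth of the descent is exactly what the paper compresses into the phrase ``regular $k$-ary tree search,'' so on the sequential part you and the paper coincide. The difference lies in how concurrency is handled. The paper's proof leads with the observation that a search visiting a node $n$ with $\neg\,reachable_n(t)$ eventually restarts, so a \emph{terminating} search visits only reachable nodes (invoking Lemma~\ref{inv-tree}); once that is granted, the descent is just ordinary $k$-ary search and $e\in R_l$ follows. You instead keep the induction running through rebalancings by appealing to Lemmas~\ref{inv-detreb} and~\ref{lrange} to argue that permanent ranges of still-reachable nodes are unaffected.

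Both routes are sound, but the paper's is slightly cleaner for this lemma: by first reducing to reachable nodes via the restart mechanism, it avoids having to reason about what happens when the search is momentarily sitting at a node that has just been replaced. Your argument mentions restart but then shifts the weight onto range invariance, which strictly speaking only covers ``still-reachable'' nodes; to close the loop you would still need the restart observation to dispose of the unreachable case. In short: your proof is correct, but making the restart-implies-reachable step explicit (as the paper does) would tighten the concurrency paragraph.
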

\begin{proof} Search visiting a node $n$ where $\neg reachable_n (t)$ eventually restarts, so a terminating search only visits reachable nodes in the tree (Lemma \ref{inv-tree}). Search of reachable nodes when $W_r$ holds is regular $k$-ary tree search.\end{proof}

\begin{lemma}\label{res-sl}
If $W_r$ holds, searching the leaf node $l$ from $t_{l1}$ to $t_{l2}$ must read the keys $O(t_{l1}, t_{l2}) \cap R_l$. \end{lemma}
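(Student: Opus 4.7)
The plan is to decompose the claim into two matching directions, using Lemma~\ref{lrange} to fix the leaf's key range and then arguing about how the leaf-search procedure observes individual slots under concurrent modifications.

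First, I would recall that by Lemma~\ref{lrange}, the range $R_l$ associated with $l$ is permanent, so $E_l(t) \subseteq R_l$ for every $t$. Consequently, any key ever present in a slot of $l$ lies in $R_l$, and any key outside $R_l$ can never be returned by a slot read on $l$. This immediately disposes of the easy inclusion: the search can only read keys in $R_l$, so whatever it reads is automatically contained in $R_l$, and the interesting content of the lemma is that the keys it \emph{must} observe are exactly those in $O(t_{l1},t_{l2}) \cap R_l$, where $O(t_{l1},t_{l2})$ is the set of keys continuously present in $l$'s slots throughout the interval.

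Next I would analyse the leaf-search code referenced from the paper: a leaf search iterates over the slots of $l$ and performs an atomic read on each slot. Each such read returns the value held in that slot at some instant within $[t_{l1}, t_{l2}]$, so any key that occupied the slot throughout the entire interval must be the value returned. Combined with the first step, this shows that for every $k \in O(t_{l1},t_{l2})$ the slot holding $k$ is stable during the search and so $k$ is read; conversely, since slots only ever hold values in $R_l \cup \{0\}$, no key outside $R_l$ can be read. Together these give the required equality with $O(t_{l1},t_{l2}) \cap R_l$.

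The main obstacle will be handling concurrent modifications of the leaf during the search. Inserts and removes may change a slot mid-iteration, so I need to argue that (i) such a slot was not continuously occupied by the same key, hence not in $O(t_{l1},t_{l2})$, and (ii) the key still read (old or new value) remains in $R_l$ by permanence of the range. The rebalancing case is handled by Lemma~\ref{ro-reb}, which guarantees that a rebalanced leaf is read-only during the operation, so no slot contents change; moreover Lemma~\ref{Er-reb} ensures the migration of keys to the new leaf preserves the set, so the characterisation via $O$ and $R_l$ is unaffected by a rebalancing event that overlaps $[t_{l1},t_{l2}]$.
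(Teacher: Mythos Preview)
Your argument is considerably more elaborate than the paper's, which consists of a single sentence: the leaf $l$ is read after a memory barrier, and this is what guarantees that the keys in $O(t_{l1},t_{l2})\cap R_l$ are observed. You never mention the memory barrier, and this is the one concrete mechanism the paper actually invokes.

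The gap in your proposal is the sentence ``Each such read returns the value held in that slot at some instant within $[t_{l1}, t_{l2}]$.'' You treat this as self-evident for atomic reads, but on weak memory models it is not: without a preceding barrier (or equivalent release/acquire synchronization), a reader may observe a stale value that was already overwritten before $t_{l1}$, so a key continuously present in the interval could be missed. The paper's proof is precisely the one-line discharge of this obligation via the barrier; your higher-level slot-by-slot reasoning is fine once that premise is in place, but you have left the premise unjustified.

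Two smaller points. First, the lemma asserts only a lower bound---these keys \emph{must} be read---not an equality; your ``converse'' direction (nothing outside $R_l$ is read) is extra and not part of the claim. Second, your reading of $O(t_{l1},t_{l2})$ as ``continuously present in $l$'s slots'' is a slight drift: $O$ is defined over $E_r$, and the restriction to $l$ comes from intersecting with $R_l$ together with $W_r$. This does not break your argument, but it is worth stating the bridge explicitly rather than silently re-interpreting $O$.
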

\begin{proof} $l$ is read after a memory barrier, ensuring that $O(t_{l1}, t_{l2}) \cap R_l$ are read.\end{proof}

\begin{lemma}\label{inv-Wr}
All writes to the tree maintain $W_r$. Formally: \begin{center} $\forall t_1, t_2 \in T . (t_1 \le t_2 \wedge W_r (t_1)) \Rightarrow W_r (t_2)$ \end{center}\end{lemma}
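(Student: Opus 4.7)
The plan is to prove this by induction on time $t$, matching the quantified form of the statement. The base case $W_r(0)$ holds vacuously: in the initial configuration the only internal nodes are $r$ and $ic$, and neither carries any keys in $S$, so the universal quantifier in $W_r$ is trivially satisfied.

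For the inductive step I would assume $W_r(t_1)$ and classify the writes that can occur at $t_2 \ge t_1$. They fall into two disjoint categories: writes that alter $E_l$ for some leaf $l$ (insert or remove of a key), and writes that alter $C_i$ or $S_i$ for some internal node $i$, which by construction of the algorithm occur only via the CAS that installs a freshly built subtree during rebalancing. Handling these two families separately suffices because no other code path in the ELB-tree modifies $C$, $S$, or $E$.

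For the leaf case the crucial ingredient is Lemma \ref{res-si}: an insert of key $e$ is routed to a leaf $l$ only when $e \in R_l$, and by Lemma \ref{lrange} the range $R_l$ is precisely the interval carved out by the keys of $l$'s ancestors. Hence $E_l \subseteq R_l$ is preserved after the write, no ancestor's $S_i$ or $C_i$ changes, and therefore the bounding inclusions that $W_r$ demands at every ancestor of $l$ continue to hold. Removes are even easier since they only shrink $E_l$.

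For the rebalancing case I would combine Lemmas \ref{ro-reb}, \ref{inv-detreb}, and \ref{Er-reb}: the old nodes are read-only, the replacement subtree is computed deterministically from that frozen snapshot, and $E_r$ is preserved. The split/merge routine of Section~3 of the paper builds each new internal node by sorting the union of old separator and child-boundary keys and redistributing them, so the new separators partition the new children by construction; meanwhile the grandparent's child pointer is swapped to a subtree representing the same multiset of keys within the same ancestor-induced range, so its $W_r$ clause is also preserved. The main obstacle is precisely this rebalancing case, because it cannot be dispatched purely by appeal to prior lemmas: one must verify that the concrete split/merge construction referenced only by figure in the companion paper yields separator keys that correctly partition the new children, and that the atomic swap preserves the inherited bounding intervals, rather than relying on an abstract property already established.
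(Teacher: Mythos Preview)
Your proposal is correct and follows essentially the same decomposition as the paper: classify writes into key insertion, key removal, and rebalancing; handle insertion via Lemma~\ref{res-si}, removal trivially (shrinking $E_l$), and rebalancing separately. The only difference is that the paper dispatches the rebalancing case with a single appeal to Lemma~\ref{lrange}, whereas you unpack it via Lemmas~\ref{ro-reb}, \ref{inv-detreb}, \ref{Er-reb} and the split/merge construction---your version is more explicit, but structurally identical.
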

\begin{proof} Writes to the tree can be classified into: key insertion, key removal, and rebalancing. 
Rebalancing maintains $W_r$ (Lemma \ref{lrange}).
Key removal and insertion only affects the keys in the tree.
$remove(e_1, e_2, t_1, t_2)$ removes an key from a leaf node $l$, which maintain $W_r$. 
$insert(e, t_1, t_2)$ inserts into leaf nodes for which $\forall t \in T. W_r (t) \Rightarrow e \in R_l$ (Lemma \ref{res-si}), which maintain $W_r$. \end{proof}

\begin{theorem}\label{lost}
ELB-trees are leaf-oriented search trees. \end{theorem}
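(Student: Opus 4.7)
The plan is to prove the theorem by induction on time $t \in T$, showing that the two defining conditions of a $k$-ary leaf-oriented search tree hold simultaneously at every $t$: the strict key ordering $W_r(t)$ at every internal node, and the tree property $|parent_n(t)| = 1$ for every reachable $n \neq r$. Nearly all of the technical work has already been packaged into Lemmas \ref{inv-tree} and \ref{inv-Wr}, so the proof of the theorem itself should be a short assembly of these pieces together with the initial-state definition from Section 2.

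For the base case at $t = 0$, the initial configuration gives $C_r(0) = \langle ic \rangle$, $C_{ic}(0) = \langle ln \rangle$, and $E_{ln}(0) = \emptyset$. Both $ic$ and $ln$ have exactly one parent, so the tree property holds; and since $S_{ic}(0)$ is empty, the universally-quantified ordering condition $W_r(0)$ is vacuously satisfied.

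For the inductive step, assume both $W_r(t_1)$ and the tree property hold at some $t_1 \in T$, and let $t_2 \ge t_1$. By Lemma \ref{inv-Wr}, $W_r(t_2)$ follows immediately from $W_r(t_1)$. With $W_r$ now available at $t_2$ (and throughout the interval), Lemma \ref{inv-tree} applies and yields the tree property at $t_2$. The order matters: Lemma \ref{inv-tree} needs $W_r$ as a hypothesis, so $W_r(t_2)$ must be established first. Together the base and step give both invariants for all $t$, which is exactly the definition of a $k$-ary leaf-oriented search tree.

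The main obstacle is not the chain of citations but the fact that ``time'' here is an unstructured set $T$ of observation points in a concurrent execution, so a naive induction ``$t \to t+1$'' is not literally available. I would address this by noting that modifications to the shared structure occur only through atomic writes (CAS operations, per Lemma \ref{inv-detreb} and Section~3b of the paper), which totally order the state transitions; the induction is then performed along this sequence of atomic updates, and arbitrary pairs $(t_1, t_2)$ with $t_1 \le t_2$ are handled by transitivity. Under that reading Lemmas \ref{inv-Wr} and \ref{inv-tree} apply verbatim at each atomic step, and the theorem follows.
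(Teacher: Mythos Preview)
Your proposal is correct and follows the same inductive skeleton as the paper: establish the base case from the initial configuration and then invoke Lemma~\ref{inv-tree} and Lemma~\ref{inv-Wr} for preservation. Your version is more explicit than the paper's two-line proof---in particular your observation that Lemma~\ref{inv-tree} requires $W_r$ as a hypothesis, and your handling of induction over atomic state transitions rather than a naive successor on $T$, are refinements the paper leaves implicit.
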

\begin{proof} ELB-trees are trees and $W_r$ holds initially. All operation on ELB-trees maintains the tree property (Lemma \ref{inv-tree}) and $W_r$ (Lemma \ref{inv-Wr}).\end{proof}

This section proves that ELB-trees are leaf-oriented search trees. Such proofs are sufficient to derive the semantics of concurrent searches and serial insertions and removals. The next section will derive the semantics of the concurrent operations, which requires a few additional lemmas.

\section{Correctness}
\label{sec-4}

This section derives the semantics of the operations. But first we will introduce some terms to reason about the results of such operations. Let: 
\begin{description}
\item[$search(e_1, e_2, t_1, t_2)$] be the result of a search operation matching against keys $e \in [e_1;e_2]$ starting at $t_1$ and ending at $t_2$;
\item[$remove(e_1, e_2, t_1, t_2)$] be the result of a remove operation matching against keys $e \in [e_1;e_2]$ starting at $t_1$ and ending at $t_2$;
\item[$insert(e, t_1, t_2)$] be an insert $e$ operation starting at $t_1$ and ending at $t_2$;
\item[$O(t_1, t_2)$] be the keys that were in $E_r$ at all times during $[t_1;t_2)$: \begin{center} $O(t_1, t_2) = \left\{ e |  \forall t \in [t_1;t_2) . e \in E_r (t)  \right\}$; and \end{center}
\item[$U(t_1, t_2)$] be the keys that were in $E_r$ at some time during $[t_1;t_2)$: \begin{center} $U(t_1, t_2) = \left\{ e |  \exists t \in [t_1;t_2). e \in E_r (t)  \right\}$. \end{center}
\end{description}

We first prove properties of search operations, then derive the operations' semantics:

\begin{lemma}\label{sl}
Searching a set of leaf nodes $RL$ from $t_1$ to $t_2$ reads the keys $\bigcup _{l \in RL} R_l \cap O(t_1, t_2)$. \end{lemma}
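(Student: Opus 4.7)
The plan is to reduce the lemma to a per-leaf application of Lemma \ref{res-sl} and then take a union, with one small monotonicity observation about $O$ in the middle. For each $l \in RL$, let $[t_{l1}, t_{l2}] \subseteq [t_1, t_2]$ denote the sub-interval during which the search is actually visiting $l$ (so $t_1 \le t_{l1} \le t_{l2} \le t_2$). By Lemma \ref{res-sl}, this visit reads at least the keys in $O(t_{l1}, t_{l2}) \cap R_l$.

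The next step is a monotonicity observation on $O$. From the definition $O(a,b) = \{e \mid \forall t \in [a;b).\, e \in E_r(t)\}$, shrinking the interval weakens the universal quantifier, so $[t_{l1}, t_{l2}] \subseteq [t_1, t_2]$ implies $O(t_1, t_2) \subseteq O(t_{l1}, t_{l2})$. Intersecting with $R_l$ yields $R_l \cap O(t_1, t_2) \subseteq R_l \cap O(t_{l1}, t_{l2})$, and Lemma \ref{res-sl} guarantees the right-hand side is read during the visit. Taking the union over $l \in RL$ then gives $\bigcup_{l \in RL}(R_l \cap O(t_1,t_2))$ as a subset of the keys read, which is the claim.

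The only subtlety I anticipate is making sure Lemma \ref{res-sl} really applies to each $l \in RL$. I would appeal to Lemma \ref{res-si} (and Theorem \ref{lost}, so that $W_r$ holds and leaf ranges are well-defined and disjoint) to argue that any leaf actually used by a terminating search is reachable and was reached along a well-formed tree descent, so the visit occurs over a bona fide sub-interval $[t_{l1}, t_{l2}]$ of $[t_1,t_2]$ on which Lemma \ref{res-sl} is valid. After that, the statement is just the composition of the per-leaf guarantee with the monotonicity of $O$, and no further case analysis on rebalancing or restarts should be required.
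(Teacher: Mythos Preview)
Your proposal is correct and follows essentially the same route as the paper: apply Lemma~\ref{res-sl} per leaf over the sub-interval $[t_{l1},t_{l2}]$, use monotonicity of $O$ with respect to interval inclusion, and take the union over $l\in RL$. In fact you state the monotonicity in the correct direction $O(t_1,t_2)\subseteq O(t_{l1},t_{l2})$, whereas the paper's displayed inclusion is written backwards (though its prose justification matches your direction).
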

\begin{proof}
The search reads the keys $\bigcup _ {l \in RL} R_l \cap O(t_{l1}, t_{l2})$ (Lemma \ref{res-sl}). $\forall l \in RL . O(t_{l1}, t_{l2}) \subseteq O(t_1, t_2)$ holds, as any key in the tree during $t_1$ to $t_2$ must have been in the tree for all fragments of that duration.
\end{proof}

\begin{theorem} $search(e_1, e_2, t_1, t_2)$ can only return $0$ (fail) if there are no matching entries in $E_r$ at all times during $[t_1, t_2)$: \begin{center} $search(e_1, e_2, t_1, t_2) = 0 \Rightarrow [e_1; e_2] \cap O(t_1, t_2) = \emptyset$ \end{center} \end{theorem}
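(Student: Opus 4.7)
The plan is to prove the contrapositive: if $[e_1; e_2] \cap O(t_1, t_2)$ is non-empty, then $search(e_1, e_2, t_1, t_2)$ cannot return $0$. Fix any $e^\star$ in that intersection.

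First I would identify a leaf whose \emph{permanent} range contains $e^\star$. Since $e^\star \in O(t_1, t_2)$, in particular $e^\star \in E_r(t_1)$, so descending from $r$ in the tree at time $t_1$ reaches some reachable leaf $l^\star$ with $e^\star \in E_{l^\star}(t_1)$. By Lemma \ref{lrange}, $R_{l^\star}$ is permanent and contains $e^\star$. In fact, the same argument shows that any reachable leaf that ever holds $e^\star$ during $[t_1, t_2)$ has $e^\star$ in its permanent range.

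Second, I would argue that the set $RL$ of leaves actually visited by the search must contain at least one leaf whose permanent range covers $e^\star$. By Theorem \ref{lost} the tree is a leaf-oriented search tree at every instant, so the interval version of search is ordinary $k$-ary interval search on a valid search tree: it recursively follows every reachable child whose key interval intersects $[e_1; e_2]$, and by the reasoning of Lemma \ref{res-si} any traversal that stumbles onto an unreachable node restarts. Combined with Lemma \ref{inv-tree} and Lemma \ref{lrange}, this forces any terminating run to include in $RL$ at least one reachable leaf $l$ with $e^\star \in R_l$. The last step is then a direct application of Lemma \ref{sl}: with $l \in RL$ and $e^\star \in R_l \cap O(t_1, t_2)$, the search reads $e^\star$; since $e^\star \in [e_1; e_2]$, it observes a matching key and must return a non-zero result, contradicting the hypothesis.

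The main obstacle is the middle step, namely formalising that interval search visits some leaf whose permanent range contains $e^\star$ even in the presence of arbitrarily interleaved rebalancings. Three facts must be combined to discharge it: Lemma \ref{Er-reb}, so that $e^\star$ never leaves $E_r$ while the structure shifts; Lemma \ref{lrange}, so that every leaf that could carry $e^\star$ has $e^\star$ in its permanent range; and the restart-on-unreachable behaviour of search, which guarantees that the traversal ultimately terminates at a reachable leaf. Together these ensure that the search commits to a path in some well-formed snapshot of the tree and therefore lands on a leaf whose permanent range covers $e^\star$.
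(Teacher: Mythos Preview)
Your argument is correct and is essentially the contrapositive, pointwise version of the paper's proof: the paper asserts directly that a failing search has visited a set of leaves $RL$ with $[e_1;e_2]\subseteq\bigcup_{l\in RL}R_l$ and then invokes Lemma~\ref{sl} (together with Theorem~\ref{lost}), whereas you fix a single $e^\star$ and argue that some $l\in RL$ has $e^\star\in R_l$ before applying Lemma~\ref{sl}. Your first paragraph (locating a particular leaf $l^\star$ holding $e^\star$ at time $t_1$) is not actually needed, since the coverage of $[e_1;e_2]$ by the visited leaves' permanent ranges is a structural property of interval search and does not depend on $e^\star$ being present in $E_r$.
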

\begin{proof}
$search(e_1, e_2, t_1, t_2) = 0$ implies that a set of leaf nodes $RL$ have been searched, where $[e_1;e_2] \subseteq \bigcup _{l \in RL} R_l$. If there was an key in $[e_1;e_2] \cap O(t_1, t_2)$ it would have been read (Theorem \ref{lost}, Lemma \ref{sl}).\end{proof}

\begin{theorem} Successful searches return a matching key that was in $E_r$ at some point in time during $[t_1;t_2)$: \begin{center}$e = search(e_1, e_2, t_1, t_2) \Rightarrow (e \in U(t_1, t_2) \wedge e \in [e_1; e_2])$\end{center} \end{theorem}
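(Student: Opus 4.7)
The plan is to split the implication into its two conjuncts and attack them separately. The matching condition $e \in [e_1; e_2]$ follows directly from the operation's specification: the implementation only reports a non-zero value after checking the candidate key against the query interval, so any returned $e$ satisfies $e_1 \le e \le e_2$ by construction. No invariants from the earlier lemmas are needed for this half.

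The substantive half is $e \in U(t_1, t_2)$. I would argue pointwise: a successful search must have physically read the value $e$ from the key array of some leaf node $l$ at a specific instant $t^* \in [t_1, t_2)$. By the restart argument in the proof of Lemma \ref{res-si}, any visit to a node $n$ with $\neg reachable_n(t)$ eventually triggers a restart, so the terminating branch that produced $e$ must have observed $l$ as reachable at $t^*$. Combining this with Theorem \ref{lost} (so $l$ behaves as a leaf of a leaf-oriented search tree at $t^*$), the key read is genuinely present in the tree: $e \in E_l(t^*) \subseteq E_r(t^*)$. The definition of $U$ then immediately gives $e \in U(t_1, t_2)$.

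The main obstacle is the concurrency hazard that a read might be satisfied from a node already detached by a concurrent rebalance, in which case the value witnessed would not correspond to any element of $E_r$ at a nearby moment. The proof must lean on the invariants that forbid this: the read-only discipline for nodes undergoing rebalancing (Lemma \ref{ro-reb}), the deterministic single-completion of rebalancing (Lemma \ref{inv-detreb}), and the restart-on-unreachable behaviour underlying Lemma \ref{res-si}. Once these are cited, no new combinatorial content is required; the theorem reduces to a short appeal to the invariants already established, mirroring the dual argument used in the preceding theorem about failed searches.
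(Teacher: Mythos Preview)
Your proposal is correct and follows the same core idea as the paper: the returned key was physically read from a leaf, hence was in $E_r$ at some instant in $[t_1,t_2)$, giving $e\in U(t_1,t_2)$; the range condition is immediate from the matching check. The only difference is packaging: the paper dispatches the whole argument in two sentences by citing Lemma~\ref{sl}, whereas you unpack the reachability-at-read-time issue explicitly and invoke Lemmas~\ref{ro-reb}, \ref{inv-detreb}, and \ref{res-si} to rule out the detached-node hazard. Your version is more careful about exactly why a key read from a possibly-stale leaf still witnesses membership in $E_r$, which the paper's terse appeal to Lemma~\ref{sl} leaves implicit (that lemma is really about keys guaranteed to be read, i.e.\ those in $O$, rather than about keys that might be read lying in $U$).
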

\begin{proof} Successful searches return a key $e$ that was read from a leaf. Since $e$ was read it must have been in $E_r$ (Lemma \ref{sl}).\end{proof}

\begin{theorem} Remove can only return $0$ (fail) if there are no matching entries in $E_r$ at all times during $[t_1, t_2)$:
\begin{center} $remove(e_1, e_2, t_1, t_2) = 0 \Rightarrow O(t_1, t_2) \cap [e_1 ; e_2] = \emptyset$. \end{center}\end{theorem}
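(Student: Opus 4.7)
The plan is to mirror the structure of the earlier theorem about a failing $search$ returning $0$, but with an added step that handles the write phase of $remove$. By Theorem~\ref{lost}, the tree is a leaf-oriented search tree, so a $remove$ that returns $0$ must, by the contrapositive of its success condition, have traversed a set of leaf nodes $RL$ whose ranges jointly cover the query interval, i.e.\ $[e_1;e_2] \subseteq \bigcup_{l \in RL} R_l$. By Lemma~\ref{lrange} these ranges are permanent, so this set decomposition is stable throughout $[t_1;t_2)$.

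Next I would invoke Lemma~\ref{sl} to conclude that every key in $\bigcup_{l \in RL} R_l \cap O(t_1,t_2)$ is actually read during the operation. In particular, any putative witness $e \in [e_1;e_2] \cap O(t_1,t_2)$ is read from some $l \in RL$. The remaining task is to show that once such an $e$ has been read, the operation cannot return $0$: it must attempt to claim $e$ via its CAS and that CAS must eventually succeed.

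To argue success of the CAS I would use the defining property of $O(t_1,t_2)$: $e$ is in $E_r(t)$ for every $t \in [t_1;t_2)$. Combined with Lemma~\ref{Er-reb}, which guarantees that rebalancing preserves $E_r$, and with Lemma~\ref{inv-detreb}, which ensures that any concurrent rebalancing completes deterministically, this means $e$ is continuously present in exactly one reachable leaf whose range contains $e$. Hence no competing $remove$ has taken $e$ away during $[t_1;t_2)$, and any read-only marking of $e$ by rebalancing is transient: after the rebalancing completes, $e$ reappears in the successor leaf and $remove$ retries. Consequently the CAS that $remove$ issues for $e$ either succeeds directly or succeeds on a retry that sees the same value, and the operation would return a non-zero result --- contradicting the assumption.

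The main obstacle I expect is precisely this last step: formally ruling out the scenario in which $remove$ reads $e$, finds it read-only due to an in-flight rebalancing, and then gives up. The cleanest way to discharge it is to appeal to Lemma~\ref{ro-reb} and Lemma~\ref{inv-detreb} to show that the read-only state is temporary, so the retry loop of $remove$ is guaranteed to encounter a writable copy of $e$ before $t_2$; the rest of the argument is a straightforward specialization of the search-failure proof.
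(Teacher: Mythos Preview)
Your first two paragraphs are exactly the paper's proof, and the paper stops there. Its argument is verbatim the search-failure argument: a remove returning $0$ has scanned a set $RL$ of leaves with $[e_1;e_2]\subseteq\bigcup_{l\in RL}R_l$, and by Lemma~\ref{sl} any key in $O(t_1,t_2)\cap[e_1;e_2]$ would have been read; since none was, the intersection is empty. The step ``matching key read $\Rightarrow$ remove does not return $0$'' is taken as immediate from the algorithm, with no discussion of the write phase.

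Your added third and fourth paragraphs are therefore not needed, and the specific line of reasoning you propose there is actually self-defeating. You argue that because the witness $e$ lies in $O(t_1,t_2)$ it remains continuously present, so the CAS on $e$ must eventually succeed before $t_2$. But a successful CAS removes $e$ from $E_r$ strictly before $t_2$, contradicting $e\in O(t_1,t_2)$. In other words, the hypothesis you are exploiting to force CAS success is precisely the hypothesis that forbids it. The right way to close the gap you are worried about is the one-line observation the paper relies on: once a matching key is read in a leaf, the remove either returns that key or retries and returns some other matching key --- in either case it does not return $0$. No appeal to Lemmas~\ref{ro-reb}, \ref{inv-detreb}, or \ref{Er-reb} is required.
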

\begin{proof} Terminating remove operations that return $0$ have searched a set of leafs $RL$ satisfying $[e_1; e_2] \subseteq \bigcup _{l \in RL} R_l$ (Lemma \ref{sl}), so any keys in $O(t_1, t_2) \cup [e_1;e_2]$ would have been read.\end{proof}

\begin{theorem} Successful remove operations remove matching a key $e$ from $E_r$ that was in $E_r$ at some point in time during $[t_1;t_2)$: \begin{center} $e = remove(e_1, e_2, t_1, t_2) \ne 0 \Rightarrow$ \ $(e_1 \le e \le min(O(t_1, t_2) \cap [e_1 ; e_2]) \le e_2 \wedge e \in U(t_1, t_2))$ \end{center} \end{theorem}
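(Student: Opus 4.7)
The implication has four conjuncts packed into it: $e_1 \le e$, $e \le \min(O(t_1,t_2) \cap [e_1;e_2])$, $\min(O(t_1,t_2) \cap [e_1;e_2]) \le e_2$, and $e \in U(t_1,t_2)$. I would split the proof along these lines and dispose of the easy conjuncts first. The bounds $e_1 \le e \le e_2$ and the bound $\min(O(t_1,t_2)\cap[e_1;e_2]) \le e_2$ (when that set is non-empty) are immediate: a successful remove only returns a key that matched its selection predicate $e \in [e_1;e_2]$, and by definition every element of $O(t_1,t_2)\cap[e_1;e_2]$ lies in $[e_1;e_2]$. The membership $e \in U(t_1,t_2)$ follows by the same argument used for successful searches in the preceding theorem: the returned $e$ was read from a reachable leaf at some time $t \in [t_1;t_2)$, and by Theorem~\ref{lost} together with Lemma~\ref{sl} that means $e \in E_r(t)$, hence $e \in U(t_1,t_2)$.

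The non-trivial conjunct is $e \le \min(O(t_1,t_2)\cap[e_1;e_2])$; I would prove it by contradiction. Assume there is $e^\star \in O(t_1,t_2)\cap[e_1;e_2]$ with $e^\star < e$. A terminating remove must, like a failing search, have inspected a set of leaves $RL$ whose permanent ranges cover the query interval, i.e.\ $[e_1;e_2] \subseteq \bigcup_{l \in RL} R_l$; otherwise no decision about the interval could be justified. By Lemma~\ref{lrange} the ranges $R_l$ are permanent, so there is some $l^\star \in RL$ with $e^\star \in R_{l^\star}$. Because $e^\star \in O(t_1,t_2)$ it is in $E_r$ throughout, so in particular it is present in $l^\star$ while $l^\star$ is scanned, and by Lemma~\ref{sl} the remove reads $e^\star$.

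Once $e^\star$ is read, I would invoke the in-leaf selection rule of remove described in Section~3 of the paper: remove scans a read-only leaf, picks the smallest matching key, and only then tries to remove it (via an ABA-safe CAS that clears the key's read-only bit). If that CAS fails, the leaf is re-scanned and, by Lemma~\ref{ro-reb}, keys that survived are still smaller or equal to the key previously chosen. Consequently, any persistent matching key smaller than $e$ in a visited leaf would have been chosen in preference to $e$, contradicting $e^\star < e$. The main obstacle is exactly this last step: translating the implementation-level ``smallest-first'' behaviour of remove into a formal guarantee, since the ELB-tree report does not elevate that property to its own lemma. I would therefore state it as an auxiliary observation about the remove routine (citing Figure~16 and Section~3 of the paper) and use it to close the contradiction.
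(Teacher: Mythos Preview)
Your overall strategy matches the paper's: argue that any persistent matching key smaller than $e$ would have been read during the scan, contradicting the choice of $e$. The paper's proof is in fact just two sentences and leaves the ``smallest-first'' step implicit, so your extra detail there is fine.

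There is, however, a genuine gap in your coverage claim. You assert that a terminating remove has inspected leaves $RL$ with $[e_1;e_2] \subseteq \bigcup_{l\in RL} R_l$, reasoning by analogy with a failing search. That analogy breaks for a \emph{successful} remove: the operation stops as soon as it has removed a key $e$, so there is no reason it ever visited leaves whose ranges lie in $(e;e_2]$. The paper states the correct, weaker coverage $[e_1;e] \subseteq \bigcup_{l\in RL} R_l$. Fortunately this is all you need: your hypothetical $e^\star$ satisfies $e_1 \le e^\star < e$, hence $e^\star \in [e_1;e]$, so it still lies in some visited $R_{l^\star}$ and Lemma~\ref{sl} applies. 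Replace the full-interval claim by $[e_1;e]$ and drop the ``otherwise no decision about the interval could be justified'' sentence (which is false for successful removes), and the argument goes through.
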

\begin{proof} Terminating remove operations have searched a set of leafs $RL$ satisfying $[e_1; e] \subseteq \bigcup _{l \in RL} R_l$ (Lemma \ref{sl}). Any keys smaller than $e$ in $O(t_1, t_2) \cup [e_1;e_2]$ would have been read.\end{proof}

\begin{theorem} $insert(e, t_1 , t_2)$ adds $e$ to the $E_r$, if $e \notin U (t_1 , t_2 )$. \end{theorem}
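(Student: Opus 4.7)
The plan is to argue by contradiction on the hypothesis $e \notin U(t_1, t_2)$. A terminating insert operation can exit in exactly one of two ways: (i) it successfully writes $e$ into some leaf, or (ii) it observes $e$ already present in a leaf and returns without modifying the tree. I would rule out exit (ii) under the hypothesis, which forces (i) and therefore places $e$ into $E_r$.

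First I would invoke Theorem \ref{lost} so that throughout $[t_1, t_2)$ the data structure is a leaf-oriented search tree in which $W_r$ holds. Lemma \ref{res-si} then guarantees that the internal $Search(e, e)$ subroutine used by insert reaches a reachable leaf $l$ with $e \in R_l$, and Lemma \ref{lrange} makes this range permanent, so insert always has a well-defined target slot for $e$.

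Next I would analyse the two exit paths. If insert returned via path (ii), then at some $t \in [t_1, t_2)$ a reachable leaf contained $e$; by the definition of $E_r$ as the union of keys in the leaves, this would give $e \in E_r(t)$ and hence $e \in U(t_1, t_2)$, contradicting the hypothesis. Therefore insert must terminate via path (i), which writes $e$ into a leaf at time $t_2$ and thereby adds $e$ to $E_r$.

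The main obstacle is handling restarts caused by concurrent rebalancing. While an ancestor of $l$ is being rebalanced, the keys of $l$ carry the read-only bit (Lemma \ref{ro-reb}), so insert's CAS on the leaf slot fails and the operation must restart. After rebalancing completes deterministically (Lemma \ref{inv-detreb}), a new leaf $l'$ replaces $l$ with the same permanent range $R_{l'} = R_l$ (Lemma \ref{lrange}) and the same represented keys $E_{l'} = E_l$ (Lemma \ref{Er-reb}). Because $e \notin U(t_1, t_2)$, the key $e$ remains absent from every such successive leaf throughout the interval, so each restart again falls into the write branch rather than the already-present branch. Since insert is given to terminate at $t_2$, one of these attempts must succeed and write $e$, placing it in $E_r$ as required.
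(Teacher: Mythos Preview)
Your case analysis is more thorough than the paper's: you explicitly rule out the ``already present'' termination branch by contradicting $e \notin U(t_1,t_2)$, whereas the paper's two-line proof simply asserts that insert terminates via a successful CAS into an empty slot of a leaf $l$ with $e\in R_l$ (Lemma~\ref{res-si}). That part of your argument is sound and supplies something the paper leaves implicit.

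There is, however, a real gap at the step where you say the successful CAS ``writes $e$ into a leaf at time $t_2$ and thereby adds $e$ to $E_r$.'' Adding $e$ to $E_r$ requires $l$ to be \emph{reachable} at the moment of the write; a CAS into a leaf that has already been replaced would not change $E_r$. The paper's proof is essentially just this one inference: because the CAS on an empty slot succeeded, that slot did not carry the read-only bit; since rebalancing sets the read-only bit on \emph{all} keys of the affected leaf before the leaf is replaced (Lemma~\ref{ro-reb}), $l$ cannot yet have been replaced, so $reachable_l(t_2)$ holds and $e\in E_r$. You invoke the read-only mechanism only to explain why a CAS \emph{fails} during rebalancing and forces a restart; you never take the contrapositive to certify that a \emph{successful} CAS lands in a still-reachable leaf. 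A related minor inaccuracy: it is not rebalancing of an \emph{ancestor} of $l$ that sets read-only bits on $l$'s keys; by Lemma~\ref{ro-reb} that happens when $l$ itself is one of the unbalanced nodes being replaced.
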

\begin{proof} Insert operations terminate when they use a successful CAS operation to write the key into an empty key of a leaf node $l$ where $e \in R_l$ (Lemma \ref{res-si}). The CAS operations success implies the key is not read-only, and hence $reachable_l (t_2)$.\end{proof}

Theorem 2-6 can be summarized as: \\
$e = search( e_1, e_2, t_1, t_2 ) \Rightarrow$ \begin{math} \left\{
     \begin{array}{lr}
       O(t_1, t_2) \cap [e_1 ; e_2] = \emptyset & : e = 0 \\
       e_1 \le e \le e_2 \wedge e \in U(t_1, t_2) & : e \neq 0
     \end{array}
   \right. \end{math}
\\$e = remove(e_1, e_2, t_1, t_2) \Rightarrow$ \begin{math} \left\{
     \begin{array}{lr}
        O(t_1, t_2) \cap [e_1 ; e_2] = \emptyset & : e = 0 \\
       e_1 \le e \le min([e_1; e_2] \cap O(t_1, t_2)) \\ ~ \wedge e \in U(t_1, t_2)  & \raisebox{11pt}{$: e \neq 0$}
     \end{array}
   \right. \end{math}
\ $insert(e, t_1, t_2)$ adds $e$ to $E_r$, if $e \notin U(t_1, t_2)$.

\section{Lock-freedom}
\label{sec-5}

Lock-freedom guarantees that as long as some thread is working on an operation $o_1$, some operation $o_2$ is coming closer to terminating. In this case we say $o_1$ is causing progress, and $o_2$ is making progress. The operations $o_1$ and $o_2$ can be different.
For ELB-trees, this means that whenever a thread is searching, inserting, or removing, some thread must be making progress. 
The following is proof that the operations are lock-free:

\begin{lemma}\label{ter-op}
Operations eventually terminate or restart part of their operation. \end{lemma}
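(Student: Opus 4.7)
The plan is to partition the work performed by any operation into a finite collection of phases and argue that each phase consists of a bounded number of local steps before the operation either moves to the next phase, terminates, or restarts an earlier part of itself. I would identify three structural phases in every operation: (i) the root-to-leaf traversal that selects the target leaf, (ii) the leaf-level action (a key read for search, or a CAS on a slot for insert and remove), and (iii) the helping or completion of any rebalancing that the operation encounters or triggers.

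The first step is to handle phase (i). At any instant the reachable portion of the tree is a finite object, so a single traversal attempt visits a bounded sequence of nodes and examines a bounded number of fields before reaching a leaf. If the traversal reads a node that is no longer reachable, or notices via the status field that its chosen child is in the middle of rebalancing, it aborts and restarts; this is exactly the restart behaviour already invoked inside the proof of Lemma~\ref{res-si}. Next, phase (ii): search terminates after reading the at most $k$ keys of the target leaf (Lemma~\ref{res-sl}), while insert and remove each issue a bounded number of CAS operations against slots of that leaf, and every such CAS either succeeds (so the operation can terminate) or fails, in which case the operation restarts at a well-defined point. For phase (iii), Lemma~\ref{inv-detreb} already establishes that a rebalance completes deterministically in a bounded number of steps, so a thread that finds a non-empty status field on the grandparent performs only bounded work to help before control returns.

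Combining the phases, an operation can be viewed as a finite state machine in which every transition is a single shared-memory read or CAS that either advances the state, terminates the operation, or jumps back to a declared restart point. No execution of this machine can take infinitely many local steps without passing through a terminal state or a restart edge, which is precisely the statement of the lemma.

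The main obstacle I expect is making the case analysis exhaustive without re-listing the pseudocode line by line. I would walk through each loop and each restart referenced in Figure~16 and the accompanying sections of the conference paper, and verify that the body between two consecutive restart points or state transitions contains only a bounded number of reads and CAS operations, with no hidden wait on another thread's progress. In particular, the helping logic for rebalancing must be checked to guarantee that a helper never loops indefinitely waiting for the original rebalancer; this follows because the helper executes the same deterministic bounded sequence from Lemma~\ref{inv-detreb} and any CAS it issues either succeeds or fails because the step has already been performed. Once every backedge in the control flow is classified as either an internal advance toward a leaf or a declared restart, the lemma is immediate.
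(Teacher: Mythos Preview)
Your proposal is correct and follows essentially the same approach as the paper: enumerate the bounded loops that occur inside each operation and observe that concurrency can only cause them to restart rather than to run unboundedly. The paper's own proof is much terser---it lists four loop categories (node search, tree search, rebalancing, leaf-key update) with explicit sequential iteration bounds ($K$, tree height, tree height, $1$) and then notes that under concurrency the last three may restart---whereas you package the same content as a three-phase state machine, but the underlying argument is identical.
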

\begin{proof} The operations' algorithms have loops in the following for: node search, tree search, rebalancing, and updating keys in leafs. The algorithms are given in the paper~\cite{bkp13}. Without concurrency, they iterate up to K, tree height, tree height, and 1 times. With concurrency, tree search, rebalancing, and key update loops may restart part of their operation.\end{proof}

\begin{lemma}\label{lf-rebl}
Rebalancing leaf nodes cause progress. \end{lemma}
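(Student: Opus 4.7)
The plan is to show that whenever a thread initiates rebalancing of a leaf node, some insert or remove operation has necessarily made progress toward termination. The argument traces a leaf's unbalanced status back to its cause: a leaf can only become unbalanced as the result of a successful key write by an insert or remove, since rebalancing itself preserves leaf keys (Lemma~\ref{Er-reb}) and the initial tree is balanced.

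First, I would pin down from the paper the precondition under which a thread begins leaf rebalancing: the thread has observed that the leaf is either too full (no empty key slot remains, making future insertion into that leaf impossible) or too empty (no usable key remains, making removal impossible). Second, I would argue that such an observation is only possible if an earlier insert or remove has committed its defining CAS on a key of that leaf. This uses Lemma~\ref{Er-reb} to rule out rebalancing as the cause of the imbalance, together with the initial-state definition to rule out the tree starting out in an unbalanced configuration. Third, I would appeal to the algorithm structure summarized in Lemma~\ref{ter-op} to note that once the committing CAS of an insert or remove has succeeded, that operation has exited its looping phase and is strictly closer to returning than it was before the CAS; hence some operation has made progress.

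The main obstacle will be the possibility that many concurrent threads observe the same imbalance and all attempt the same rebalancing, since we do not want to count every such attempt as requiring a fresh insert or remove to have made progress. This is resolved by Lemma~\ref{inv-detreb}: a given rebalancing completes deterministically exactly once regardless of how many threads participate, so a single causing insert or remove suffices to account for the entire bout of rebalancing activity on that leaf. A secondary subtlety is matching the terminology: the operation \emph{causing} progress is the rebalancing-initiator, while the one \emph{making} progress is the prior insert or remove whose committing CAS produced the imbalance, which is consistent with the definition allowing these two operations to differ.
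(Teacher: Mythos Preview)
Your overall strategy---trace the unbalanced state of a leaf back to a committed insert or remove and count that as the progressing operation---is the right idea, and it is close to what the paper does.  However, one step is not justified by the lemma you invoke, and that step is exactly where the paper's proof does its real work.

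You write that ``rebalancing itself preserves leaf keys (Lemma~\ref{Er-reb})'' and use this to rule out rebalancing as a cause of imbalance.  Lemma~\ref{Er-reb} only says that the \emph{global} set $E_r$ is unchanged by rebalancing; it says nothing about the sizes of the individual new leaves that rebalancing creates.  Rebalancing \emph{does} change the contents of individual leaves---that is its purpose---so Lemma~\ref{Er-reb} does not, by itself, tell you that the freshly produced leaves are balanced.  Without that, you cannot conclude that every subsequent imbalance is attributable to a \emph{new} insert or remove, and the amortization to distinct progress events breaks down.  The paper closes this gap by computing the size range of the newly produced leaves, namely $[\min(2S,\,0.5D);\,D-1]$, and observing that such a leaf can absorb at least one insertion or removal before it can again trigger rebalancing; this is the quantitative fact your argument is missing.

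The paper also separates out a case you fold in implicitly: if the leaf is written between the decision to rebalance and the rebalancing itself, that write is already an operation making progress, independent of the size analysis.  Your use of Lemma~\ref{inv-detreb} to collapse concurrent helpers into a single rebalancing event is fine, but it addresses a different concern (many threads, one rebalancing) from the one above (many rebalancings, each needing its own progress witness).
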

\begin{proof} If the nodes are written to between deciding to rebalance and rebalancing,  some operation has made progress.
If there are no writes, the size of the first node is either D or S, resulting in balanced nodes of $size \in [min(2 S, 0.5 D); D-1]$. Such nodes can be removed from and inserted into at least once before requiring additional rebalancing. As such, every time a rebalancing completes, one operation has made progress.\end{proof}

\begin{lemma}\label{lf-rebi}
Rebalancing internal nodes cause progress. \end{lemma}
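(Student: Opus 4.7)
The plan is to parallel Lemma~\ref{lf-rebl} closely, splitting into two cases according to whether concurrent writes interfere with the rebalance. In the interfering case, if the unbalanced internal nodes or their parent are written to between the decision to rebalance and the installation of the replacement, then by Lemma~\ref{ro-reb} the only possible writer is another thread completing a conflicting rebalance or CAS; by Lemma~\ref{inv-detreb} that completion itself constitutes progress, so the obligation is discharged immediately.

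In the quiet case, I would argue that the rebalance installs new internal nodes whose child counts lie strictly inside the bounds that trigger rebalancing, leaving slack in both directions (analogous to the $[\min(2S,0.5D); D-1]$ range used in Lemma~\ref{lf-rebl}). This slack guarantees that the node can absorb at least one further child insertion or removal before it again becomes unbalanced. Because children of an internal node are inserted or removed only as the effect of a split or a merge completing strictly below it, the present rebalance is needed no more often than once per lower-level rebalance. Each such lower-level event is itself a progress event, either by Lemma~\ref{lf-rebl} when the level below consists of leaves, or by this lemma applied at strictly smaller subtree height.

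The main obstacle I anticipate is making the induction rigorous and genuinely well-founded. I would organize the argument as an induction on the height of the rebalanced subtree, taking Lemma~\ref{lf-rebl} as the base case; termination follows from trees having finite height and each recursive appeal descending to strictly smaller height. A secondary concern is CAS-failure bookkeeping during the install phase: I would rely on the ABA-safety and STEP2 argument already established in Lemma~\ref{inv-detreb} so that every failed attempt to swing the grandparent's child pointer is charged to a successful concurrent CAS by another thread, which is once again progress elsewhere. Putting these pieces together yields the required claim that a completed internal-node rebalance always corresponds to some operation making progress.
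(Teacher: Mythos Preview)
Your proposal is correct and follows essentially the same approach as the paper: argue that rebalancing an internal node leaves slack so that at least one lower-level rebalance must occur before the node needs rebalancing again, then cascade down to leaves and invoke Lemma~\ref{lf-rebl}. The paper's own proof is a single sentence to this effect; your additional case split on concurrent interference and the explicit height induction are sound elaborations but not new ideas beyond what the paper already sketches (the interference case is handled in Lemma~\ref{lf-rebl} rather than repeated here).
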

\begin{proof} Rebalancing internal nodes leads to child nodes that can be rebalanced at least one. Each leaf rebalancing cause progress (Lemma \ref{lf-rebl}), hence each internal rebalancing cause progress. \end{proof}

\begin{theorem}\label{lf-s}
Search causes progress. \end{theorem}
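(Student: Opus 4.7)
The plan is to reduce \textbf{Theorem \ref{lf-s}} to the three lemmas of this section by a direct case analysis on how a search step finishes, so that every outcome is attributed either to the search itself making progress or to a concurrent operation that has already been shown to cause progress.

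First I would appeal to \textbf{Lemma \ref{ter-op}}: any loop inside a search (node search or tree search) either terminates or restarts part of the operation. In the terminating case, the call either reaches a leaf and reports a result or establishes that the range is empty; in either case the search itself has moved closer to completion, so progress has been made. This handles the sequential-looking branch of the proof immediately and requires no further argument.

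The substantive part is the restart case. Here I would argue that a tree-search restart can only be triggered by a condition that reflects a concurrent write: the search detects that the node it is visiting is no longer reachable, or that its status field has been updated, or that a key it read has been marked read-only. Each such trigger corresponds to exactly one of the categories classified in the proof of \textbf{Lemma \ref{inv-Wr}}: a rebalancing step, a key insertion, or a key removal performed by some other thread. Rebalancings cause progress by \textbf{Lemmas \ref{lf-rebl}} and \textbf{\ref{lf-rebi}}, while an insertion or removal that has committed its CAS has by definition terminated and thus made progress on its own. Consequently, whenever the search restarts, some other operation has completed a step that qualifies as progress.

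The main obstacle I anticipate is making the claim that \emph{every} restart trigger is caused by a committed concurrent modification truly airtight: one has to rule out the possibility that a search restarts purely due to its own stale local state with no corresponding global change. This will rest on the fact, established during the rebalancing lemmas, that the status field and the read-only bits are only ever set by operations that have successfully advanced the tree state via ABA-safe CAS operations. Once that link between ``restart trigger observed'' and ``some CAS by another operation succeeded'' is stated clearly, the theorem follows: in every execution of a search step, either the search progresses toward termination or a concurrent operation has progressed, which is exactly the lock-free progress condition required.
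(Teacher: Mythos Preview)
Your overall structure---splitting into ``search terminates'' versus ``search does not terminate directly'' and then invoking Lemmas~\ref{lf-rebl} and~\ref{lf-rebi} for the latter---matches the paper's, but your treatment of the second branch differs from the paper and leaves a gap.

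The paper's dichotomy is not ``terminate versus restart due to another thread's write'' but rather ``terminate versus \emph{the search itself rebalances a node}.'' In the ELB-tree design, when a search encounters a node whose status field signals an in-progress rebalancing, the searching thread helps carry that rebalancing to completion before continuing. This is why the paper can invoke Lemmas~\ref{lf-rebl} and~\ref{lf-rebi} directly: the searching thread's own steps \emph{are} rebalancing steps, so the search is causing progress by performing the rebalance. Your framing instead attributes the restart to a modification ``performed by some other thread,'' which leaves open the scenario in which that other thread sets the status field and then stalls; without the helping step, the search could observe the same pending rebalance on every retry and loop forever with no operation advancing. Your obstacle paragraph senses this issue but does not resolve it---the missing ingredient is precisely that the search participates in the rebalancing rather than merely witnessing it.

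A smaller point: the restart triggers you enumerate (node unreachable, status field set, read-only bits set) are all artifacts of rebalancing; plain key insertions and removals in leaves do not cause a \emph{search} to restart. So the insert/remove branches of your case analysis are vacuous here and belong instead to Theorem~\ref{lf-ri}. Dropping them and replacing ``another thread modified the tree'' with ``the search helps complete the pending rebalance'' would align your argument with the paper's two-case proof and close the gap.
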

\begin{proof} Search eventually terminates, similar to $k$-ary tree search, or rebalances a node (Lemma \ref{res-si}). In the first case the search operation is making progress. In the second case some operation is making progress (Lemma \ref{lf-rebl}, Lemma \ref{lf-rebi}).\end{proof}

\begin{theorem}\label{lf-ri}
Remove and insert operations cause progress. \end{theorem}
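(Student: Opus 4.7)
The plan is to decompose a remove or insert operation into the loop types identified in Lemma \ref{ter-op}---tree search, node search within a leaf, rebalancing, and leaf-key update---and to argue that each iteration either contributes directly to the operation's own termination or is forced by a state change that itself constitutes progress by some other operation.

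First I would dispose of the tree-search phase by appealing directly to Theorem \ref{lf-s}: both remove and insert begin by locating a target leaf, and every iteration of that phase either advances the search toward termination or is interrupted by a rebalancing which, by Lemmas \ref{lf-rebl} and \ref{lf-rebi}, causes progress. Next I would handle the rebalancing that remove and insert may themselves trigger when a leaf becomes too empty or too full; here Lemmas \ref{lf-rebl} and \ref{lf-rebi} apply directly, so each completed rebalance is progress. This leaves the node-search loop (scanning up to $K$ key slots of a leaf) and the key-update loop, which are the substantive cases.

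For the key-update step---the CAS that writes a new key (insert) or clears an existing key (remove) in a leaf---I would argue by case analysis on the CAS outcome. A successful CAS means the operation has completed its defining write, so the operation itself has made progress. A failed CAS implies that between the read and the CAS, either (a) another remove or insert successfully wrote that slot, which is progress by that other operation, or (b) the leaf was made read-only as part of rebalancing (by Lemma \ref{ro-reb}), in which case the rebalancing causes progress by Lemmas \ref{lf-rebl} and \ref{lf-rebi}. A restart of the node-search loop similarly only occurs when the leaf's contents observably changed, which can only be due to a competing successful write or a rebalance, both of which are progress.

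The main obstacle I expect is ruling out a scenario where remove or insert restarts indefinitely without any other thread truly advancing---specifically, bounding the number of harmless retries so that every restart can be charged to a concrete progress event elsewhere. Handling this cleanly requires noting that the read-only bit and the status field used by rebalancing both change monotonically during a given rebalance (by Lemma \ref{inv-detreb}), so a retry can be attributed to a specific completed or in-progress rebalance or to a specific competing key write, not to spurious interference. Once this accounting is in place, the theorem follows: every loop iteration of remove or insert is either the final productive step of the operation itself or is witnessed by a progress event of another operation, which is exactly the definition of causing progress.
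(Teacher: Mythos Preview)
Your proposal is correct and follows essentially the same approach as the paper: decompose remove/insert into a search phase (handled by Theorem \ref{lf-s}) followed by a bounded leaf-write phase whose restarts are attributed either to rebalancing (progress by Lemmas \ref{lf-rebl} and \ref{lf-rebi}) or to a competing operation's successful write (progress directly). Your version is considerably more detailed---the explicit CAS case analysis, the separate treatment of node-search restarts, and the monotonicity accounting for read-only bits---but the skeleton of the argument and the lemmas invoked are the same as the paper's.
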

\begin{proof} The operations proceed as searches followed by writes to leaf nodes. The leaf node write takes a bounded number of steps, as each key may be read once, but the steps can be restarted due to rebalancing, or other insertions and removals terminating. In the first case, some operation is nearing termination, and in the second case some operation terminated (Lemma \ref{lf-rebl}, Lemma \ref{lf-rebi}).\end{proof}

\section{Conclusion}
\label{sec-6}

This technical report has introduced, proved, and derived properties of ELB-trees. ELB-trees have been proven to be leaf-oriented search trees. Their operations' semantics have been derived as:\\
$e = search( e_1, e_2, t_1, t_2 ) \Rightarrow$ \begin{math} \left\{
     \begin{array}{lr}
       O(t_1, t_2) \cap [e_1 ; e_2] = \emptyset & : e = 0 \\
       e_1 \le e \le e_2 \wedge e \in U(t_1, t_2) & : e \neq 0
     \end{array}
   \right. \end{math}
\\$e = remove(e_1, e_2, t_1, t_2) \Rightarrow$ \begin{math} \left\{
     \begin{array}{lr}
        O(t_1, t_2) \cap [e_1 ; e_2] = \emptyset & : e = 0 \\
       e_1 \le e \le min([e_1; e_2] \cap O(t_1, t_2)) \\ ~ \wedge e \in U(t_1, t_2)  & \raisebox{11pt}{$: e \neq 0$}
     \end{array}
   \right. \end{math}
\ $insert(e, t_1, t_2)$ adds $e$ to $E_r$, if $e \notin U(t_1, t_2)$.
Finally the operations have been proven to be lock-free.

\end{document}